\newtheorem{definition}{Definition}
\newtheorem{lemma}{Lemma} 
\newtheorem{theorem}{Theorem}
\newtheorem{observation}{Observation}
\newcommand{\N}{\mathds{N}}
\newcommand{\Polytime}{\mathrm{P}}
\newcommand{\NP}{\mathrm{NP}}
\newcommand{\coNP}{\mathrm{coNP}}
\newcommand{\UP}{\mathrm{UP}}
\newcommand{\PLS}{\mathrm{PLS}}
\newcommand{\coUP}{\mathrm{coUP}}
\newcommand{\scurr}{\mathop{{\tt s\_curr}}}
\newcommand{\snext}{\mathop{{\tt s\_next}}}
\newcommand{\Arrival}{\mbox{\rm ARRIVAL}}
\newcommand{\Run}{\mbox{\sc Run}}
\newcommand{\xx}{\mathbf{x}}
\title{$\Arrival$: A zero-player graph game in  $\NP\cap\coNP$\thanks{This research
    was done in the 2014 undergraduate seminar \emph{Wie
      funktioniert Forschung}? (How does research work?)}}
\author{J\'{e}r\^{o}me Dohrau, Bernd G\"artner, Manuel Kohler, Ji\v{r}\'{i} Matou\v{s}ek,
  Emo Welzl\thanks{Department of Computer Science, Institute of Theoretical
Computer Science, ETH Z\"urich, CH-8092 Z\"urich, Switzerland, {\tt
  gaertner@inf.ethz.ch} (corresponding author)}}
\begin{document}
\maketitle

\begin{abstract}
  Suppose that a train is running along a railway network, starting
  from a designated origin, with the goal of reaching a designated
  destination. The network, however, is of a special nature: every
  time the train traverses a switch, the switch will change its
  position immediately afterwards. Hence, the next time the train
  traverses the same switch, the other direction will be taken, so
  that directions alternate with each traversal of the switch.

  Given a network with origin and destination, what is the
  complexity of deciding whether the train, starting at the origin,
  will eventually reach the destination?

  It is easy to see that this problem can be solved in exponential time,
  but we are not aware of any polynomial-time method. In this short
  paper, we prove that the problem is in $\NP\cap \coNP$. This raises
  the question whether we have just failed to find a (simple)
  polynomial-time solution, or whether the complexity status is more
  subtle, as for some other well-known (two-player) graph
  games~\cite{Halman}.
\end{abstract}

\section{Introduction}
In this paper, a \emph{switch graph} is a directed graph $G$ in which
every vertex has at most two outgoing edges, pointing to its \emph{even}
and to its \emph{odd} successor. Formally, a switch graph is a
4-tuple $G=(V,E,s_0,s_1)$, where $s_0, s_1:V\rightarrow V$, $E=
\{(v,s_0(v)): v\in V\} \cup \{(v,s_1(v)): v\in V\}$, with loops
$(v,v)$ allowed.  Here, $s_0(v)$ is the even successor of $v$, and
$s_1(v)$ the odd successor.  We may have $s_0(v)=s_1(v)$ in which
case $v$ has just one outgoing edge. We always let $n=|V|$; for $v\in
V$, $E^+(v)$ denotes the set of outgoing edges at $v$, while $E^-(v)$
is the set of incoming edges.

Given a switch graph $G=(V,E,s_0,s_1)$ with origin and destination
$o,d\in V$, the following procedure describes the train run that we
want to analyze; our problem is to decide whether the procedure
terminates. For the procedure, we assume arrays $\scurr$ and $\snext$,
indexed by $V$, such that initially $\scurr[v]=s_0(v)$ and
$\snext[v]=s_1(v)$ for all $v\in V$.\medskip

\begin{algorithmic}
\Procedure{Run}{$G,o,d$}
\State{$v := o$}
\While{$v\neq d$} 
\State{$w := \scurr[v]$}
\State{swap ($\scurr[v], \snext[v]$)}
\State{$v := w$} \Comment{traverse edge $(v,w)$}
\EndWhile 
\EndProcedure
\end{algorithmic}

\begin{definition}
  Problem $\Arrival$ is to decide whether procedure
  $\Run(G,o,d)$ terminates for a given switch graph $G=(V,E,s_0,s_1)$
  and $o,d\in V$.
\end{definition}

\begin{theorem}\label{thm:decidable}
Problem $\Arrival$ is decidable.
\end{theorem}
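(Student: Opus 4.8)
The plan is to exploit the fact that $\Run$ is a completely deterministic process whose behaviour is governed by a finite amount of state. At any moment during the execution of the while-loop, everything about the future is determined by two pieces of information: the vertex $v$ currently occupied by the train, and the current contents of the arrays $\scurr,\snext$. Since $\snext[u]$ is always the successor of $u$ that $\scurr[u]$ is not, the array pair is in fact captured by a single bit per vertex, recording whether $\scurr[u]=s_0(u)$ or $\scurr[u]=s_1(u)$. I will call the pair (current vertex, switch configuration) the \emph{state} of the run.

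First I would observe that the number of distinct states is finite: there are at most $n$ choices for the current vertex and $2^n$ choices for the switch configuration, so at most $n\cdot 2^n$ states in total. Next, because one iteration of the while-loop maps each state deterministically to a unique successor state, the sequence of states visited by $\Run(G,o,d)$ is entirely determined by the initial state. Consequently, if the train ever returns to a state it has visited before, it is trapped: from that point on it repeats the same cycle of states forever, and in particular it never reaches $d$ (had it reached $d$, the loop would already have halted).

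The decision procedure is then immediate: simulate $\Run(G,o,d)$ for at most $n\cdot 2^n$ iterations. If the train reaches $d$ within this many steps, the procedure terminates and we answer ``yes''. Otherwise, by the pigeonhole principle some state must have recurred among the more than $n\cdot 2^n$ states encountered, so by the previous paragraph the run cycles forever without reaching $d$, and we answer ``no''. Since each iteration is clearly computable, this is an effective (if exponential-time) algorithm, which establishes decidability.

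I do not expect a genuine obstacle here; the only point that needs care is the claim that revisiting a state forces a non-terminating cycle, which rests on the determinism of the transition together with the fact that $d$ is absorbing (once reached, the loop stops). The bound $n\cdot 2^n$ is crude but entirely sufficient for the theorem; obtaining a sharper, say polynomial, bound on the number of steps is a separate and more delicate matter that is not needed here.
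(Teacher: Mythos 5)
Your proposal is correct and follows essentially the same route as the paper: interpret $\Run$ as a deterministic map on states $(v,\scurr,\snext)$, of which there are at most $n2^n$, and simulate for that many iterations, concluding nontermination if a state repeats. The extra detail you supply (why a repeated state traps the run) is a fine elaboration of the same argument.
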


\begin{proof}
  The deterministic procedure $\Run$ can be interpreted as a function
  that maps the current \emph{state} $(v, \scurr,\snext)$ to the next
  state. We can think of the state as the current location of the
  train, and the current positions of all the switches. As at most $n2^n$
  different states may occur, $\Run$ either terminates within this
  many iterations, or some state repeats, in which case $\Run$ enters
  an infinite loop. Hence, to decide $\Arrival$, we have to go through
  at most $n2^n$ iterations of $\Run$.
\end{proof}

Figure~\ref{fig:exponential} shows that a terminating run may indeed
take exponential time.

\begin{figure}[htb]
\begin{center}
\includegraphics[width=0.5\textwidth]{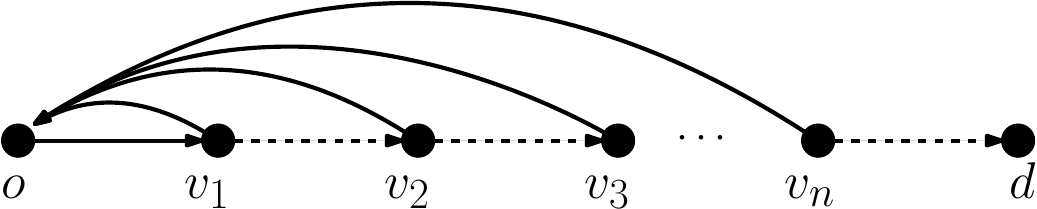}
\end{center}
\caption{Switch graph $G$ with $n+2$ vertices on which $\Run(G,o,d)$
  traverses an exponential number of edges. If we encode the
  current positions of the switches at $v_n,\ldots,v_1$ with an
  $n$-bit binary number (0: even successor is next; 1: odd successor is
  next), then the run counts from $0$ to $2^n-1$, resets the
  counter to $0$, and terminates. Solid edges point to even or unique successors, dashed
  edges to odd successors.}\label{fig:exponential}
\end{figure}

Existing research on switch graphs (with the above, or similar
definitions) has mostly focused on actively controlling the switches,
with the goal of attaining some desired behavior of the network
(e.g.\ reachability of the destination); see e.g.~\cite{Katz2012}. The
question that we address here rather fits into the theory of cellular
automata. It is motivated by the online game \emph{Looping Piggy}
(\url{https://scratch.mit.edu/projects/1200078/}) that the second
author has written for the \emph{Kinderlabor}, a Swiss-based
initiative to educate children at the ages 4--12 in natural sciences
and computer science (\url{http://kinderlabor.ch}).

It was shown by Chalcraft and Greene~\cite{CG} (see also
Stewart~\cite{Stu}) that the train run can be made to simulate a
Turing machine if on top of our ``flip-flop switches'', two other
types of natural switches can be used. Consequently, the arrival
problem is undecidable in this richer model; it then also becomes
NP-complete to decide whether the train reaches the destination for
\emph{some} initial positions of a set of flip-flop
switches~\cite{Mar}.

Restricting to flip-flop switches with fixed initial positions, the
situation is much less complex, as we show in this paper. In
Sections~\ref{sec:np} and \ref{sec:conp}, we prove that $\Arrival$ is
in $\NP$ as well as in $\coNP$; Section~\ref{sec:ip} shows that a
terminating run can be interpreted as the unique solution of a
flow-type integer program with balancing conditions whose LP
relaxation may have only fractional optimal solutions.

\section{$\Arrival$ is in $\NP$}\label{sec:np}
A natural candidate for an $\NP$-certificate is the \emph{run profile}
of a terminating run. The run profile assigns to each edge the number
of times it has been traversed during the run. The main difficulty is
to show that fake run profiles cannot fool the verifier. We start with
a necessary condition for a run profile: it has to be a
\emph{switching flow}. 

\begin{definition}
  Let $G=(V,E,s_0,s_1)$ be a switch graph, and let $o,d\in V$, $o\neq
  d$. A \emph{switching flow} is a function
  $\xx:E\rightarrow\N_0$ (where $\xx(e)$ is denoted as $x_e$) such that the
  following two conditions hold for all $v\in V$.
\begin{eqnarray}
\sum_{e\in E^+(v)} x_e - \sum_{e\in E^-(v)} x_e =
\left\{
  \begin{array}{rl}
    1, & v = o, \\
  -1, & v = d, \\
   0, & \mbox{otherwise}.
  \end{array}
\right.
\label{eq:conserve_flow}
\end{eqnarray}

\begin{eqnarray}
0 \leq x_{(v, s_1(v))} \leq x_{(v, s_0(v))} \leq x_{(v, s_1(v))} + 1. 
\label{eq:balance_flow}
\end{eqnarray}
\end{definition}

\begin{observation}\label{obs:termination_flow}
  Let $G=(V,E,s_0,s_1)$ be a switch graph, and let $o,d\in V$, $o\neq
  d$, such that $\Run(G,o,d)$ terminates. Let $\xx(G,o,d):E\rightarrow
  \N_0$ (the run profile) be the function that assigns to each edge
  the number of times it has been traversed during $\Run(G,o,d)$. Then
  $\xx(G,o,d)$ is a switching flow.
\end{observation}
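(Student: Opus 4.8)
The plan is to view the terminating run as a walk $o = u_0, u_1, \ldots, u_T = d$ in $G$, where the $i$-th iteration of the loop traverses the edge $(u_i, u_{i+1})$, and then to read both defining conditions of a switching flow directly off this walk. By definition $x_e$ counts the indices $i \in \{0,\ldots,T-1\}$ with $(u_i,u_{i+1}) = e$, so $\sum_{e \in E^+(v)} x_e$ equals the number of \emph{departures} from $v$ (the occurrences of $v$ among $u_0,\ldots,u_{T-1}$), while $\sum_{e \in E^-(v)} x_e$ equals the number of \emph{arrivals} at $v$ (the occurrences of $v$ among $u_1,\ldots,u_T$). Self-loops cause no trouble here, since a loop edge at $v$ contributes equally to both sums.

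For the conservation condition~\eqref{eq:conserve_flow}, I would observe that the difference between these two counts telescopes: it equals $[u_0 = v] - [u_T = v]$, the number of times $v$ occurs as the very first vertex of the walk minus the number of times it occurs as the very last. Since $u_0 = o$ and $u_T = d$ with $o \neq d$, this difference is $1$ for $v = o$, it is $-1$ for $v = d$, and it is $0$ for every other vertex, which is exactly what~\eqref{eq:conserve_flow} demands.

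For the balancing condition~\eqref{eq:balance_flow}, the key point is the alternation built into $\Run$: because $\scurr[v]$ and $\snext[v]$ are initialized to $s_0(v)$ and $s_1(v)$ and are swapped after each departure, the successive departures from $v$ alternately follow the edge to $s_0(v)$ and the edge to $s_1(v)$, beginning with $s_0(v)$. Listing the departures from $v$ in the order they occur, the odd-indexed ones therefore use $(v,s_0(v))$ and the even-indexed ones use $(v,s_1(v))$. Hence, writing $k$ for the total number of departures from $v$, in the case $s_0(v) \neq s_1(v)$ we get $x_{(v,s_0(v))} = \lceil k/2 \rceil$ and $x_{(v,s_1(v))} = \lfloor k/2 \rfloor$, so that $0 \le \lfloor k/2\rfloor \le \lceil k/2 \rceil \le \lfloor k/2\rfloor + 1$ establishes~\eqref{eq:balance_flow}.

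The only care needed is in the degenerate case $s_0(v) = s_1(v)$, where $v$ has a single outgoing edge and both symbols $x_{(v,s_0(v))}$ and $x_{(v,s_1(v))}$ denote the same quantity $k$; then~\eqref{eq:balance_flow} reads $0 \le k \le k \le k+1$ and holds trivially. Beyond this edge case and the boundary bookkeeping at $o$ and $d$, the argument is entirely elementary, so I do not anticipate a substantive obstacle: the whole content is a faithful translation of the deterministic alternation rule into edge-traversal counts.
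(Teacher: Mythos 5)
Your proof is correct and follows essentially the same route as the paper's: flow conservation via counting departures and arrivals along the walk, and the balancing condition via the alternation between $s_0(v)$ and $s_1(v)$ starting with the even successor. The paper states these two points in one sentence each; you simply spell out the telescoping argument and the $\lceil k/2\rceil$/$\lfloor k/2\rfloor$ count, including the degenerate case $s_0(v)=s_1(v)$, which is a faithful elaboration rather than a different approach.
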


\begin{proof}
Condition (\ref{eq:conserve_flow}) is simply flow conservation (if the
run enters a vertex, it has to leave it, except at $o$ and $d$), 
while (\ref{eq:balance_flow}) follows from the run alternating
between successors at any vertex $v$, with the even successor $s_0(v)$
being first.
\end{proof}

\begin{figure}[htb]
\begin{center}
\includegraphics[width=0.3\textwidth]{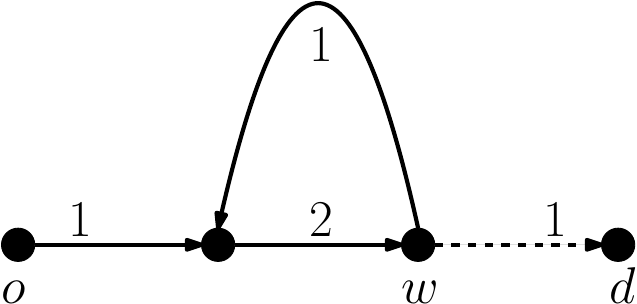}~~~~~~~~\includegraphics[width=0.3\textwidth]{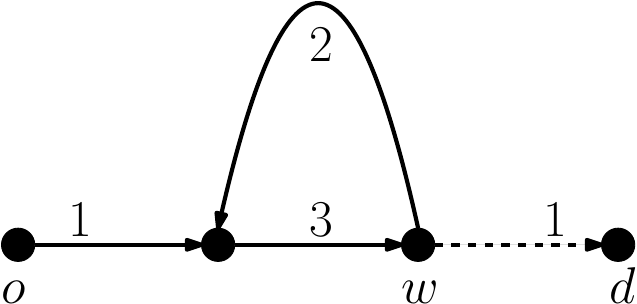} 
\end{center}
\caption{Run profile (left) and fake run profile (right); both are
  switching flows. Solid edges point to even or unique successors,
  dashed edges to odd successors.}\label{fig:fakerun}
\end{figure}

While every run profile is a switching flow, the converse is not
always true. Figure~\ref{fig:fakerun} shows two switching flows for
the same switch graph, but only one of them is the actual run
profile. The ``fake'' run results from going to the even
successor of $w$ twice in a row, before going to the odd successor
$d$. This shows that the balancing condition (\ref{eq:balance_flow})
fails to capture the strict alternation between even and odd
successors. Despite this, and maybe surprisingly, the existence of a switching flow
implies termination of the run.

\begin{lemma}\label{lem:flow_termination}
 Let $G=(V,E,s_0,s_1)$ be a switch graph, and let $o,d\in V$, $o\neq
  d$. If there exists a switching flow $\xx$, then
  $\Run(G,o,d)$ terminates, and $\xx(G,o,d)\leq\xx$ (componentwise).
\end{lemma}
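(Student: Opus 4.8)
The plan is to run the procedure and bound each edge's traversal count by the given switching flow. Concretely, I would execute $\Run(G,o,d)$ and let $t_e$ denote the number of times edge $e$ has been traversed so far. The goal is the invariant $t_e \le x_e$ for every $e$ at every moment of the run. Granting this invariant, the total number of steps is at most $\sum_{e} x_e < \infty$, so the run cannot loop forever; since every vertex has an outgoing edge, the train never gets stuck, and therefore it must reach $d$ and terminate. The final counts, which are exactly the run profile, then satisfy $\xx(G,o,d)\le\xx$ componentwise, as claimed.

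To establish the invariant I would argue by first violation. Suppose it fails, and consider the first step that would push some count above its bound: just before this step all $t_f \le x_f$, the train sits at a vertex $v$, and it is about to traverse an edge $e=(v,w)$ with $t_e = x_e$. Introduce the residual $y_e := x_e - t_e \ge 0$. The traversal counts $t$ themselves obey flow conservation, but with source $o$ and sink equal to the current location $v$: after $k$ steps the net out-flow of the counts is $+1$ at $o$, $-1$ at $v$, and $0$ elsewhere (the standard telescoping identity for a walk $o=u_0\to\cdots\to u_k=v$). Subtracting this from the conservation law \eqref{eq:conserve_flow} satisfied by $\xx$ shows that $y$ is a nonnegative flow whose only imbalances are $+1$ at $v$ and $-1$ at $d$. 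In particular $\sum_{e\in E^+(v)} y_e \ge 1$, so some edge leaving $v$ carries strictly positive residual.

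The crux is to contradict this using the balancing condition \eqref{eq:balance_flow} together with the alternation built into the run. At $v$ the departures alternate $s_0(v), s_1(v), s_0(v),\dots$, so immediately before the step the two out-counts differ according to the parity of the next move: if $e=(v,s_0(v))$ then $t_{(v,s_0(v))}=t_{(v,s_1(v))}$, while if $e=(v,s_1(v))$ then $t_{(v,s_0(v))}=t_{(v,s_1(v))}+1$. Feeding the saturation $t_e = x_e$ into \eqref{eq:balance_flow} forces the \emph{other} out-edge of $v$ to be saturated as well: in the even case $x_{(v,s_1(v))}\le x_{(v,s_0(v))}=t_{(v,s_1(v))}\le x_{(v,s_1(v))}$, and in the odd case $x_{(v,s_0(v))}\le x_{(v,s_1(v))}+1 = t_{(v,s_0(v))}\le x_{(v,s_0(v))}$. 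Either way both out-edges of $v$ have residual $0$, so $\sum_{e\in E^+(v)} y_e = 0$, contradicting the previous paragraph. The case $v=d$ never arises, since the loop halts upon reaching $d$ and $o\ne d$ is assumed; and the argument is uniform whether or not $v=o$, because the residual $y$ always has its unique surplus at the current vertex $v$.

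I expect the main obstacle to be the second paragraph: correctly recognizing that the partial run's traversal counts form a flow from $o$ to the \emph{current} vertex, and then packaging $y=\xx - t$ as a genuine nonnegative flow whose source is precisely the vertex at which we are about to overspend. Once this bookkeeping is set up so that ``$v$ is a source of $y$'' becomes the statement ``some out-edge of $v$ has positive residual,'' the balancing inequalities close the argument almost mechanically, turning the alternation of successors into the forced saturation of both out-edges.
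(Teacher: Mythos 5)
Your proof is correct and is essentially the paper's own argument: the residual $y=\xx-t$ is exactly the paper's ``remaining pebbles,'' the first-violation setup is the same, and both proofs close the contradiction by combining flow conservation of the residual at the current vertex with the alternation-plus-balancing observation that saturation of one out-edge forces saturation of the other. The only difference is bookkeeping (you argue just before the offending step via the exact parity relation between the two out-counts, the paper just after it via the ``differ by at most one'' invariant), which does not change the substance.
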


\begin{proof}
  We imagine that for all $e\in E$ we put $x_e$ pebbles on edge $e$,
  and then start $\Run(G,o,d)$. Every time an edge is traversed, we let
  the run collect one pebble. The claim is that we
  never run out of pebbles, which proves termination as well as the
  inequality for the run profile.

  To prove the claim, we first observe two invariants: during the run, flow
  conservation (w.r.t.\ to the remaining pebbles) always holds, except
  at $d$, and at the current vertex which has one more pebble on its
  outgoing edges. Moreover, by alternation, starting with the even
  successor, the numbers of pebbles on $(v,s_0(v))$ and $(v,s_1(v))$
  always differ by at most one, for every vertex $v$.

  For contradiction, consider now the first iteration of $\Run(G,o,d)$
  where we run out of pebbles, and let $e=(v,w)$ be the edge (now
  holding $-1$ pebbles) traversed in the offending iteration. By the
  above alternation invariant, the other outgoing edge at $v$ cannot
  have any pebbles left, either. Then the flow conservation invariant
  at $v$ shows that already some incoming edge of $v$ has a deficit of
  pebbles, so we have run out of pebbles before, which is a
  contradiction.
\end{proof}

\begin{theorem}\label{thm:np}
Problem $\Arrival$ is in $\NP$.
\end{theorem}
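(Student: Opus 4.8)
The plan is to use a switching flow as the $\NP$-certificate, exploiting the two directions already in hand: Observation~\ref{obs:termination_flow} shows that a terminating run yields a switching flow (its run profile), while Lemma~\ref{lem:flow_termination} shows conversely that the mere existence of a switching flow forces termination. Together these give the clean characterization that $\Run(G,o,d)$ terminates if and only if a switching flow exists, which is exactly the kind of succinct-witness structure one wants for membership in $\NP$. The verifier, given a candidate $\xx:E\rightarrow\N_0$, simply checks the flow-conservation constraints (\ref{eq:conserve_flow}) and the balancing constraints (\ref{eq:balance_flow}) at every vertex; both are systems of linear (in)equalities over the integers and can be tested in time polynomial in the size of the input together with the number of bits used to write down $\xx$.

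The one point that needs care is the \emph{size} of the certificate, since we have already seen (Figure~\ref{fig:exponential}) that a terminating run may traverse exponentially many edges, so the entries of a switching flow can be exponentially large as integers. This is not an obstruction for $\NP$, however, because such entries are still only polynomially long when written in binary. Concretely, for a yes-instance I would take the run profile $\xx(G,o,d)$ itself as the witness: by Theorem~\ref{thm:decidable} the run performs at most $n2^n$ iterations, so every edge is traversed at most $n2^n$ times, and hence each entry $x_e$ fits in $O(n)$ bits. Since $|E|\le 2n$, the whole certificate has $O(n^2)$ bits, which is polynomial in the input size.

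Putting the pieces together, the nondeterministic algorithm guesses a function $\xx:E\rightarrow\N_0$ whose entries are bounded by $n2^n$ (equivalently, given by $O(n)$-bit integers), verifies conditions (\ref{eq:conserve_flow}) and (\ref{eq:balance_flow}) in polynomial time, and accepts exactly when both hold. For correctness: if $\Run(G,o,d)$ terminates, the run profile is an accepted certificate by Observation~\ref{obs:termination_flow} together with the size bound above; conversely, if any guessed $\xx$ is accepted, it is a switching flow, so $\Run(G,o,d)$ terminates by Lemma~\ref{lem:flow_termination}. The main conceptual obstacle --- that a switching flow need not be a genuine run profile, as the ``fake run'' of Figure~\ref{fig:fakerun} illustrates --- has already been surmounted by Lemma~\ref{lem:flow_termination}, which guarantees that even a fake-looking switching flow still certifies termination; what remains here is only the routine bookkeeping of certificate length and verification time.
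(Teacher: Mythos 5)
Your proof is correct and follows exactly the paper's argument: the certificate is a switching flow (in practice the run profile), Observation~\ref{obs:termination_flow} and Lemma~\ref{lem:flow_termination} give the two directions of correctness, and the $n2^n$ bound from the proof of Theorem~\ref{thm:decidable} keeps the binary-encoded certificate polynomial in size. Your write-up is merely more explicit about the bit-length bookkeeping than the paper's version.
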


\begin{proof}
  Given an instance $(G,o, d)$, the verifier receives a function
  $\xx:E\rightarrow\N_0$, in form of binary encodings of the values
  $x_e$, and checks whether it is a switching flow. For a
  Yes-instance, the run profile of $\Run(G,o,d)$ is a witness by
  Observation~\ref{obs:termination_flow}; the
  proof of Theorem~\ref{thm:decidable} implies that the verification
  can be made to run in polynomial time, since every value $x_e$ is
  bounded by $n2^n$. For a No-instance, the check will fail by
  Lemma~\ref{lem:flow_termination}.
\end{proof}

\section{$\Arrival$ is in $\coNP$}\label{sec:conp}
Given an instance $(G,o,d)$ of $\Arrival$, the main idea is to
construct in polynomial time an instance $(\bar{G},o,\bar{d})$ such
that $\Run(G,o,d)$ terminates if and only if
$\Run(\bar{G},o,\bar{d})$ does not terminate. As the main
technical tool, we prove that nontermination is equivalent to the
arrival at a ``dead end''.

\begin{definition}
  Let $G=(V,E,s_0,s_1)$ be a switch graph, and let $o,d\in V$, $o\neq
  d$. A \emph{dead end} is a vertex from which there is no directed
  path to the destination $d$ in the graph $(V,E)$.  A \emph{dead
    edge} is an edge $e=(v,w)$ whose head $w$ is a dead end. An edge
  that is not dead is called \emph{hopeful}; the length of the
  shortest directed path from its head $w$ to $d$ is called its
  \emph{desperation}.
\end{definition}

By computing the tree of shortest paths to $d$, using inverse
breadth-first search from $d$, we can identify the dead ends in
polynomial time. Obviously, if $\Run(G,o,d)$ ever reaches a dead
end, it will not terminate, but the converse is also true. For this,
we need one auxiliary result.

\begin{lemma}\label{lem:glimmer}
  Let $G=(V,E,s_0,s_1)$ be a switch graph, $o,d\in V$, $o\neq d$, and
  let $e=(v,w)\in E$ be a hopeful edge of desperation $k$. Then
  $\Run(G,o,d)$ will traverse $e$ at most $2^{k+1}-1$ times.
\end{lemma}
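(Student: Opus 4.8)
The plan is to prove the bound by induction on the desperation $k$ of the edge $e=(v,w)$, where I must keep in mind that the statement concerns the \emph{entire} run, which need not terminate: the real content of the lemma is that every hopeful edge is traversed only finitely often, so that nontermination must be caused by dead edges. For the base case $k=0$ the head satisfies $w=d$, and as soon as $\Run(G,o,d)$ traverses $e$ it reaches $d$ and the while-loop halts; hence $e$ is traversed at most once, which is $2^{0+1}-1$.

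For the inductive step, assume the bound for all hopeful edges of desperation less than $k\ge 1$. Since $\delta(w)=k\ge 1$ means $w\neq d$, the vertex $w$ has a successor $u$ whose shortest distance to $d$ is $k-1$ (the shortest path from $w$ leaves through such a successor). The outgoing edge $f=(w,u)$ is therefore hopeful of desperation $k-1$, so the induction hypothesis gives that $f$ is traversed at most $2^{k}-1$ times. I would then count $m$, the total number of departures from $w$. Because the run strictly alternates between the even and odd successors of $w$, beginning with the even one, the two outgoing edges of $w$ are traversed $\lceil m/2\rceil$ and $\lfloor m/2\rfloor$ times respectively (and if $w$ has a single outgoing edge, it is traversed all $m$ times); in every case $f$ is traversed at least $\lfloor m/2\rfloor$ times. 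Combining $\lfloor m/2\rfloor\le 2^{k}-1$ yields $m\le 2^{k+1}-1$.

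It remains to pass from departures to traversals of $e$ itself. Since $w\neq d$, every entry into $w$ is immediately followed by a departure, so the number of entries into $w$ equals $m$ (or $m-1$ if $w=o$) and is in any case at most $m$; each traversal of $e$ is one such entry, so $e$ is traversed at most $m\le 2^{k+1}-1$ times, completing the induction. The main obstacle here is not any single calculation but getting the bookkeeping right so that it remains valid for a possibly infinite run: I expect the crux to be identifying the correct quantity to induct on — the number of departures from the head $w$ — and coupling it, via the factor of two forced by the alternation, to a strictly less desperate outgoing edge.
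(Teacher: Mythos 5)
Your proof is correct and follows essentially the same inductive argument as the paper's: base case at $d$, coupling the two outgoing edges of $w$ via strict alternation, and passing from traversals of $e$ to departures from $w$ (which the paper phrases as flow conservation at $w$). Your explicit count of $\lceil m/2\rceil$ and $\lfloor m/2\rfloor$ is just a more detailed rendering of the paper's ``the other successor edge is traversed at most once more.''
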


\begin{proof} Induction on the desperation $k$ of $e=(v,w)$. If $k=0$,
  then $w=d$, and indeed, the run will traverse $e$ at most $2^1-1=1$
  times. Now suppose $k>0$ and assume that the statement is true for
  all hopeful edges of desperation $k-1$. In particular, one of the
  two successor edges $(w, s_0(w))$ and $(w,s_1(w))$ is such a hopeful
  edge, and is therefore traversed at most $2^{k}-1$ times. By
  alternation at $w$, the other successor edge is traversed at most
  once more, hence at most $2^k$ times. By flow conservation, the
  edges entering $w$ (in particular $e$) can be traversed at most
  $2^k+2^k-1=2^{k+1}-1$ times.
\end{proof}

\begin{lemma}\label{lem:deadend}
  Let $G=(V,E,s_0,s_1)$ be a switch graph, and let $o,d\in V$, $o\neq
  d$. If $\Run(G,o,d)$ does not terminate, it will reach a dead end.
\end{lemma}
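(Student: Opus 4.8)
The plan is to prove the contrapositive-free statement directly: assuming $\Run(G,o,d)$ does not terminate, I want to exhibit a dead end that the run reaches. The key observation is that nontermination must be caused by the train getting ``stuck'' among vertices whose outgoing edges are traversed infinitely often, and I will argue that such vertices cannot all be hopeful.

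First I would look at the set $S$ of edges that are traversed infinitely many times during the (nonterminating) run. Since the run never halts but only finitely many distinct states exist (as in the proof of Theorem~\ref{thm:decidable}), the run eventually enters a cycle of states, and $S$ is nonempty; moreover $S$ is exactly the set of edges traversed in that state-cycle. The crucial structural fact I would extract is that $S$ is \emph{closed} under taking successors at its heads: if $e=(v,w)\in S$, then $w$ is entered infinitely often, so $w$ must also be \emph{left} infinitely often (the train cannot accumulate at $w$), and hence at least one outgoing edge of $w$ lies in $S$.

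The main step is then to derive a contradiction from the assumption that every edge of $S$ is hopeful. Here I would invoke Lemma~\ref{lem:glimmer}: each hopeful edge of desperation $k$ is traversed at most $2^{k+1}-1$ times, a \emph{finite} bound. But every edge in $S$ is traversed infinitely often by definition of $S$, so no edge of $S$ can be hopeful. Therefore some edge $e=(v,w)\in S$ is dead, meaning its head $w$ is a dead end. Since $e\in S$ is traversed (infinitely often, in particular at least once), the run does enter $w$, so it reaches a dead end, as required.

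The main obstacle I anticipate is justifying that the run actually reaches the head of a dead edge, rather than merely establishing the abstract existence of a dead end somewhere in the graph. The argument above handles this cleanly because $S$ consists precisely of edges the run \emph{does} traverse, so any dead edge in $S$ is genuinely visited. An alternative route would avoid defining $S$ and instead argue by strong induction on desperation that, if the run never reaches a dead end, then every traversed edge is hopeful and hence (by Lemma~\ref{lem:glimmer}) traversed only finitely often, forcing termination; I would likely present whichever of these two phrasings reads more economically, but both ultimately rest on the finite traversal bound from Lemma~\ref{lem:glimmer} doing the heavy lifting.
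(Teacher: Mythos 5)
Your proposal is correct and follows the same route as the paper: both arguments rest entirely on Lemma~\ref{lem:glimmer} giving a finite traversal bound for hopeful edges, so a nonterminating run must eventually traverse a dead edge and hence reach a dead end. Your version merely spells out the pigeonhole step (some edge is traversed infinitely often, and such an edge cannot be hopeful) that the paper leaves implicit.
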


\begin{proof} By Lemma~\ref{lem:glimmer}, hopeful edges can be
  traversed only finitely many times, hence if the run cycles, it
  eventually has to traverse a dead edge and thus reach a dead end.
\end{proof}

Now we can prove the main result of this section.

\begin{theorem}
Problem $\Arrival$ is in $\coNP$.
\end{theorem}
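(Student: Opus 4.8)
The plan is to place the complement of $\Arrival$ in $\NP$, which is equivalent to placing $\Arrival$ itself in $\coNP$. Concretely, I would exhibit the polynomial-time transformation $(G,o,d)\mapsto(\bar{G},o,\bar{d})$ promised in the section preamble, with the property that $\Run(G,o,d)$ terminates if and only if $\Run(\bar{G},o,\bar{d})$ does \emph{not} terminate. Given such a transformation, a No-instance of $\Arrival$ (a nonterminating run) maps to a Yes-instance of $\Arrival$ on $\bar{G}$, whose run profile is an $\NP$-witness by Observation~\ref{obs:termination_flow} and Theorem~\ref{thm:np}. Hence the $\coNP$-verifier, on input $(G,o,d)$, would build $\bar{G}$ in polynomial time and then check a purported switching flow on $\bar{G}$ exactly as in the proof of Theorem~\ref{thm:np}.

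The structural tool I would rely on is Lemma~\ref{lem:deadend}: the run fails to terminate precisely when it reaches a dead end, and the set $D$ of dead ends is computable in polynomial time by inverse breadth-first search from $d$. This yields a clean dichotomy: a terminating run reaches $d$ and never visits any dead end (a dead end has no directed path to $d$, so visiting one would preclude ever reaching $d$), whereas a nonterminating run reaches some dead end after finitely many steps and never reaches $d$. I would exploit this by redirecting exactly the vertices in $D\cup\{d\}$, as follows. Introduce one fresh vertex $\bar{d}$ as the new destination; for every dead end $u\in D$ set both successors of $u$ to $\bar{d}$; for the old destination $d$ install a self-loop (both successors equal to $d$); and leave the successor functions of all remaining vertices untouched. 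The result is a valid switch graph on $n+1$ vertices, constructed in polynomial time once $D$ is known.

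Correctness then hinges on a single lockstep observation. Since $\bar{G}$ and $G$ have identical successor functions on $V\setminus(D\cup\{d\})$, the runs $\Run(G,o,d)$ and $\Run(\bar{G},o,\bar{d})$ pass through the same sequence of states up to the first time $T$ at which the original run enters $D\cup\{d\}$. If $\Run(G,o,d)$ terminates, then $T$ is its first arrival at $d$ with no dead end seen earlier, so on $\bar{G}$ the run reaches $d$ and spins forever in the self-loop, i.e.\ does not terminate. If $\Run(G,o,d)$ does not terminate, then by Lemma~\ref{lem:deadend} the time $T$ is finite and the vertex reached is a dead end, so on $\bar{G}$ the run steps from it to $\bar{d}$ and halts. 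Thus termination is exactly inverted, and the reduction is complete.

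The main obstacle, as I see it, is not the reduction itself but justifying the lockstep claim together with the dichotomy rigorously: one must check that a terminating original run never touches a dead end before reaching $d$ (so the redirected vertices are encountered only at the decisive step), and that the switch-state evolution on the untouched vertices is genuinely identical in both graphs. Both facts follow immediately from Lemma~\ref{lem:deadend} and from the fact that the redirection alters no vertex the run visits before its fate is decided, but this is precisely the point where the argument would break if the construction accidentally modified a relevant successor.
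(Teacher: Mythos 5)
Your proposal is correct and matches the paper's own proof essentially verbatim: the same construction (a fresh destination $\bar{d}$, rerouting every dead end to $\bar{d}$, a self-loop at the old destination $d$, all other successors unchanged), the same reliance on Lemma~\ref{lem:deadend} for the dichotomy, and the same lockstep argument showing termination is exactly inverted, followed by an appeal to Theorem~\ref{thm:np} on $(\bar{G},o,\bar{d})$. The point you flag as the main obstacle is handled in the paper exactly as you suggest, by noting that the successor functions agree on all vertices visited before the run's fate is decided.
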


\begin{proof}
  Let $(G,o,d)$ be an instance, $G=(V,E,s_0,s_1)$. We transform
  $(G,o,d)$ into a
  new instance $(\bar{G},o,\bar{d})$,
  $\bar{G}=(\bar{V},\bar{E},\bar{s}_0,\bar{s}_1)$ as follows. We set
  $\bar{V}=V\cup\{\bar{d}\}$, where $\bar{d}$ is an additional vertex, the
  new destination. We define $\bar{s}_0,\bar{s}_1$ as follows. For
  every dead end $w$, we set
\begin{equation}
\bar{s}_0 (w) = \bar{s}_1 (w) := \bar{d}.
\label{eq:newdest}
\end{equation}
For the old destination $d$, we install the loop
\begin{equation}
\bar{s}_0 (d) = \bar{s}_1 (d) := d.
\label{eq:loop}
\end{equation}
For the new destination, $\bar{s}_0(\bar{d})$ and
$\bar{s}_1(\bar{d})$ are chosen arbitrarily. In all other cases,
$\bar{s}_0(v):=s_0(v)$ and $\bar{s}_1(v):=s_1(v)$. This defines
$\bar{E}$ and hence $\bar{G}$.

The crucial properties of this construction are the following: 

\begin{itemize}
\item[(i)] If $\Run(G,o,d)$ reaches the destination $d$, it has not
  visited any dead ends, hence $s_0$ and $\bar{s}_0$ as well as $s_1$
  and $\bar{s}_1$ agree on all visited vertices except $d$. This means
  that $\Run(\bar{G},o,\bar{d})$ will also reach $d$, but then cycle
  due to the loop that we have installed in \eqref{eq:loop}.
\item[(ii)] If $\Run(G,o,d)$ cycles, it will at some point reach a first
  dead end $w$, by Lemma~\ref{lem:deadend}. As $s_0$ and $\bar{s}_0$ as
  well as $s_1$ and $\bar{s}_1$ agree on all previously visited
  vertices, $\Run(\bar{G},o,\bar{d})$ will also reach $w$, but then
  terminate due to the edges from $w$ to $\bar{d}$ that we have
  installed in \eqref{eq:newdest}.
\end{itemize}

To summarize, $\Run(G,o,d)$ terminates if and only if
$\Run(\bar{G},o,\bar{d})$ does not terminate. Since
$(\bar{G},o,\bar{d})$ can be constructed in polynomial time, we can
verify in polynomial time that $(G,o,d)$ is a No-instance by verifying
that $(\bar{G},o,\bar{d})$ is a Yes-Instance via Theorem~\ref{thm:np}.
\end{proof}

\section{Is $\Arrival$ in $\Polytime$?} \label{sec:ip}
Observation~\ref{obs:termination_flow} and
Lemma~\ref{lem:flow_termination} show that $\Arrival$ can be decided
by checking the solvability of a system of linear (in)equalities
(\ref{eq:conserve_flow}) and (\ref{eq:balance_flow}) over the
nonnegative integers.

The latter is an NP-complete problem in general: many of the
standard NP-complete problems, e.g.\ SAT (satisfiability of
boolean formulas) can easily be reduced to finding an integral vector
that satisfies a system of linear (in)equalities.

In our case, we have a flow structure, though, and finding integral
flows in a network is a well-studied and easy problem~\cite[Chapter
8]{KV12}. In particular, if only the flow conservation constraints
(\ref{eq:conserve_flow}) are taken into account, the existence of a
nonnegative integral solution is equivalent to the existence of a
nonnegative real solution. This follows from the classical
\emph{Integral Flow Theorem}, see~\cite[Corollary 8.7]{KV12}. Real
solutions to systems of linear (in)equalities can be found in
polynomial time through linear programming~\cite[Chapter 4]{KV12}.

However, the additional balancing constraints (\ref{eq:balance_flow})
induced by alternation at the switches, make the situation more
complicated. Figure~\ref{fig:fractional} depicts an instance which has
a real-valued ``switching flow'' satisfying constraints
(\ref{eq:conserve_flow}) and (\ref{eq:balance_flow}), but no integral
one (since the run does not terminate). 

\begin{figure}[htb]
\begin{center}
\includegraphics[width=0.4\textwidth]{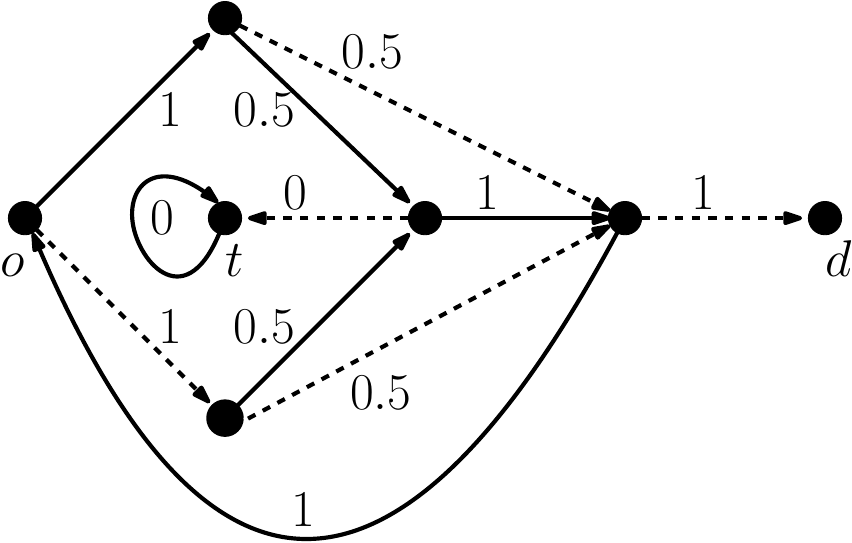}
\end{center}
\caption{The  run will enter the loop at $t$ and cycle, so there is no
  (integral) switching flow.  But a real-valued ``switching flow''
  (given by the numbers) exists. Solid edges point to even or unique successors,
  dashed edges to odd successors. \label{fig:fractional}}
\end{figure}

We conclude with a result that summarizes the situation and may be
the basis for further investigations.

\begin{theorem}\label{thm:ip}
  Let $G=(V,E,s_0,s_1)$ be a switch graph, and let $o,d\in V$, $o\neq
  d$. $\Run(G,o,d)$ terminates if and only if there exists an integral
  solution satisfying the constraints (\ref{eq:conserve_flow}) and
  (\ref{eq:balance_flow}). In this case, the run profile $\xx(G,o,d)$
  is the unique integral solution that minimizes the linear objective
  function $\Sigma(\xx)=\sum_{e\in E}x_e$ subject to the constraints
  (\ref{eq:conserve_flow}) and (\ref{eq:balance_flow}). 
\end{theorem}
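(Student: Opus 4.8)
The plan is to deduce the entire statement from the two facts already in hand, with essentially no new work. The first is Observation~\ref{obs:termination_flow}: when $\Run(G,o,d)$ terminates, its run profile $\xx(G,o,d)$ is a switching flow. The second is Lemma~\ref{lem:flow_termination}: the existence of any switching flow forces termination and, crucially, yields the componentwise bound $\xx(G,o,d)\le\xx$ against that flow. I would begin by recording the trivial but necessary observation that an integral vector satisfies (\ref{eq:conserve_flow}) and (\ref{eq:balance_flow}) precisely when it is a switching flow, so the two descriptions are interchangeable.

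The equivalence in the first sentence is then immediate. If $\Run(G,o,d)$ terminates, Observation~\ref{obs:termination_flow} exhibits the switching flow $\xx(G,o,d)$, so an integral solution exists; conversely, any integral solution is a switching flow, and Lemma~\ref{lem:flow_termination} turns its existence into termination.

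For the minimization claim I would assume termination, so that $\xx(G,o,d)$ is itself a switching flow, and then exploit that Lemma~\ref{lem:flow_termination} delivers more than termination: for \emph{every} switching flow $\xx$ it gives $\xx(G,o,d)\le\xx$ componentwise. Since $\Sigma$ sums nonnegative coordinates, this domination directly gives $\Sigma(\xx(G,o,d))\le\Sigma(\xx)$, so the run profile attains the minimum over all feasible integral points. Uniqueness I would obtain from the same bound made strict: any switching flow $\xx\neq\xx(G,o,d)$ dominates $\xx(G,o,d)$ componentwise with strict inequality in at least one coordinate, whence $\Sigma(\xx)>\Sigma(\xx(G,o,d))$, so no competitor can tie.

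I do not expect a genuine obstacle: the real content --- the componentwise inequality $\xx(G,o,d)\le\xx$ --- is already proved inside Lemma~\ref{lem:flow_termination} by the pebble argument. The only subtlety worth flagging is that this single inequality does double duty, certifying both that $\xx(G,o,d)$ \emph{is} a minimizer (being dominated by every feasible point) and that it is the \emph{unique} one (the domination is strict against any equal-objective competitor). In fact the argument proves slightly more than asked: $\xx(G,o,d)$ is the componentwise-least switching flow, so it minimizes every coordinatewise-monotone objective, not just $\Sigma$.
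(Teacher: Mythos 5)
Your proposal is correct and follows essentially the same route as the paper's own proof: both derive the equivalence from Observation~\ref{obs:termination_flow} and Lemma~\ref{lem:flow_termination}, and both obtain minimality and uniqueness from the componentwise bound $\xx(G,o,d)\le\xx$ supplied by that lemma. Your phrasing of uniqueness (a distinct competitor must exceed the run profile strictly in some coordinate, hence has strictly larger $\Sigma$) is just the contrapositive of the paper's argument.
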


\begin{proof}
  Observation~\ref{obs:termination_flow} and
  Lemma~\ref{lem:flow_termination} show the equivalence between
  termination and existence of an integral solution (a switching
  flow). Suppose that the run terminates with run profile
  $\xx(G,o,d)$. We have $\xx(G,o,d)\leq\xx$ for every switching flow
  $\xx$, by Lemma~\ref{lem:flow_termination}. In particular,
  $\Sigma(\xx(G,o,d))\leq\Sigma(\xx)$, so the run profile has minimum
  value among all switching flows. A different switching flow $\xx$
  of the same value would have to be smaller in at least one
  coordinate, contradicting $\xx(G,o,d)\leq\xx$.
\end{proof}

Theorem~\ref{thm:ip} shows that the existence of $\xx(G,o,d)$ and its
value can be established by solving an integer program~\cite[Chapter
5]{KV12}. Moreover, this integer program is of a special kind: its
unique optimal solution is at the same time a least element w.r.t.\ the
partial order ``$\leq$'' over the set of feasible solutions.

\section{Conclusion}
The main question left open is whether the zero-player graph game
$\Arrival$ is in $\Polytime$. There are three well-known two-player
graph games in $\NP\cap\coNP$ for which membership in $\Polytime$ is
also not established: \emph{simple stochastic games}, \emph{parity
  games}, and \emph{mean-payoff games}. All three are
even in $\UP\cap\coUP$, meaning that there exist efficient verifiers
for Yes- and No-instances that accept \emph{unique}
certificates~\cite{Con,Jur}. In all three cases, the way to prove this
is to assign payoffs to the vertices in such a way that they form a
certificate if and only if they solve a system of equations with a
unique solution.

It is natural to ask whether also $\Arrival$ is in $\UP\cap\coUP$. We
do not know the answer. The natural approach suggested by
Theorem~\ref{thm:ip} is to come up with a verifier that does not
accept just any switching flow, but only the unique one of minimum
norm corresponding to the run profile. However, verifying optimality
of a feasible integer program solution is hard in general, so for this
approach to work, one would have to exploit specific structure of the
integer program at hand. We do not know how to do this.

As problems in $\NP\cap\coNP$ cannot be $\NP$-hard (unless $\NP$ and
$\coNP$ collapse), other concepts of hardness could be considered for
$\Arrival$. As a first step in this direction, Karthik C.~S.~\cite{Kar} has
shown that a natural search version of $\Arrival$ is contained in the
complexity class $\PLS$ (Polynomial Local Search) which has complete
problems not known to be solvable in polynomial time. $\PLS$-hardness
of $\Arrival$ would not contradict common complexity theoretic
beliefs; establishing such a hardness result would at least provide a
satisfactory explanation why we have not been able to find a
polynomial-time algorithm for $\Arrival$.

\section{Acknowledgment} 
We thank the referees for valuable comments and Rico
Zenklusen for constructive discussions. 

\bibliography{piggy_arxiv}
\bibliographystyle{plain}

\end{document}